\newtheorem{definition}{Definition}
\newtheorem{theorem}{Theorem}
\newtheorem{remark}{Remark}
\newtheorem{corollary}{Corollary}
\renewcommand{\P}{\mathbb{P}}
\newcommand{\E}{\mathbb{E}}
\newcommand{\Rbb}{\mathbb{R}}
\newcommand{\Bcal}{\mathcal{B}}
\newcommand{\set}[1]{\left\{#1\right\}}
\title{An Infinite Dimensional Model for\\ A Many Server Priority Queue}
\author{ %
  \IEEEauthorblockN{Neal Master, Zhengyuan Zhou, and Nicholas Bambos}%
  \IEEEauthorblockA{Department of Electrical Engineering, Stanford University\\
    Stanford, CA 94305\\
    \texttt{\{nmaster, zyzhou, bambos\}@stanford.edu}} }
\begin{document}

\maketitle
\thispagestyle{empty}
\pagestyle{empty}

\begin{abstract}
  We consider a Markovian many server queueing system in which
  customers are preemptively scheduled according to exogenously
  assigned priority levels. The priority levels are randomly assigned
  from a continuous probability measure rather than a discrete one and
  hence, the queue is modeled by an infinite dimensional stochastic
  process. We analyze the equilibrium behavior of the system and
  provide several results. We derive the Radon-Nikodym derivative
  (with respect to Lebesgue measure) of the measure that describes the
  average distribution of customer priority levels in the system; we
  provide a formula for the expected sojourn time of a customer as a
  function of his priority level; and we provide a formula for the
  expected waiting time of a customer as a function of his priority
  level. We verify our theoretical analysis with discrete-event
  simulations. We discuss how each of our results generalizes previous
  work on infinite dimensional models for single server priority
  queues.
\end{abstract}

\section{Introduction}
Priority queueing models arise in several applications. In packet
switched communication networks, priority levels are used to deliver
differentiated levels of quality of service, e.g. \cite{Shin_2001,
  Semeria_2001}. In emergency medicine, priority queueing models are
used to study triage policies, e.g. \cite{Green_2006}. Priority
queueing models are also used in financial engineering to model order
books in which limit orders are given priority for matching with other
orders based on their price and time of arrival
\cite{Cont_2010}. Because priority queueing models are useful is so
many domains, several priority queueing models exist; see
\cite{Jaiswal_Book} for a standard reference on stochastic priority
queueing.

In this paper, we formulate and analyze an $M/M/c$ priority queueing
model in which priority levels are drawn from a
\emph{continuum}. Unlikely previous models that allow for finitely
many priority levels, e.g. \cite{Helly_1962, Burke_1962}, our model
requires an infinite dimensional state process. Consequently, standard
Markov chain techiques that apply to finitely many priority levels,
e.g. \cite{White_1958}, do not apply. Our recent previous work
\cite{Master_ACC_2017} also considered a continuous distribution of
priority levels but only for the single server case. The current paper
generalizes the results in our previous work \cite{Master_ACC_2017} by
extending the results to a many server queue.

The idea of using a continuous distribution for randomly assigning
priority levels was also recently proposed as a scheduling mechanism
for the $M/G/1$ queue \cite{Haviv_2016}. Although the preemptive
priority scheduling mechanism is the same for both our work and the
work in \cite{Haviv_2016}, a major difference is that our work (and
our previous work \cite{Master_ACC_2017}) provides a characterization
of the distribution of customer priority levels in the system in
equilibrium. In contrast, \cite{Haviv_2016} focuses more on the effect
of the randomized scheduling on the overall population. Another major
difference is that we do not assume that the system is stable. Our
current work considers a system with many servers and is hence
distinct from both \cite{Master_ACC_2017} and \cite{Haviv_2016}.

Because of the complexity that arises due to having a continuum of
priority levels, we opt to simplify other aspects of the model.  We
note that all customers in our model experience the same service rate
regardless of their priority level. This differs from other priority
queueing models, e.g. \cite{Takacs_1964}, and restricts our attention
to models in which priority levels only impact scheduling and not
service rate. We also focus on preemptive scheduling as in
\cite{Chang_1965} rather than non-preemtive scheduling as in
\cite{Kapadia_1984}. By focusing on preemptive scheduling we know that
the customer who is being serviced is always the customer with the
highest priority. Both of these assumptions (uniform service rate and
preemptive scheduling) were also exploited in \cite{Master_ACC_2017}
and \cite{Haviv_2016}.

We note that the use of infinite dimensional stochastic processes is
itself not novel to queueing. Measure-valued processes have been used
to study the earliest-deadline-first discipline \cite{Doytchinov_2001}
and the processor-sharing discipline \cite{Gromoll_2004}, as well as
many server \cite{Kaspi_2013} and infinite server models
\cite{Reed_2015}. In these contexts, the state of the system varies
continuously as the dynamic properties of the jobs change. In our
model, the priority levels are static and so the state only changes at
arrival and departure events. Consequently, our model is substantially
more tractable. Indeed, while these other works focus on diffusion
approximations, we will only present exact results.

With this motivation and background in mind, the remainder of the
paper is organized as follows. In Section~\ref{sec:model} we fully
describe our model and discuss different choices for the state. In
Section~\ref{sec:theory} we analyze the steady state behavior of the
system. We compute the measure that tells us the average distribution
of customer priority levels in the system. We derive formulae for the
expected sojourn and waiting times of a customer as a function of his
priority level. We note how these results generalize our previous work
\cite{Master_ACC_2017}. In Section~\ref{sec:sim} we provide some
simulation results that verify our analytical results. In
Section~\ref{sec:future} we discuss potential future work and we
conclude in Section~\ref{sec:conclusions}.

\section{Model Formulation\label{sec:model}}
In this section we formally describe our model and explain our
modeling assumptions. We highlight the fact that certain seemingly
limiting assumptions are actually without loss of generality. We
present three state representations and explain their
equivalence. This model is very similar to the model from our previous
work \cite{Master_ACC_2017}; the key difference is that here we allow
for more than one server.

We consider an infinite buffer queue with $c$ servers. Customers
arrive according to a Poisson process with rate $\alpha >
0$. Customers have independent and identically distributed (IID)
service times that are exponentially distributed. Since time can be
scaled arbitrarily, we assume that the service times have unit mean.
Therefore, the load is $\alpha / c$. Customers are also assigned IID
priority levels that are uniformly distributed on the unit
interval. The priority levels are independent of all other random
quantities in the model. Customers are scheduled preemptively
according to their priorities. When there are at most $c$ customers in
the system, each customer is assigned to a server; when there are more
than $c$ customers in the system, the $c$ with the highest priority
levels are assigned to the $c$ servers while the rest wait. When a new
customer arrives and no servers are available, he may immediately
preempt the lowest priority customer who is in service. The preempted
customer waits in the buffer.  In summary, we have an $M/M/c$ queue
(not necessarily stable) in which customers are preemptively scheduled
according to exogenously assigned IID $U([0, 1])$ priority levels.

Note that customers are scheduled based on their relative order rather
than their absolute value and consequently, the fact that the priority
levels are drawn from $U([0, 1])$ (as opposed to some other
distribution) is actually without loss of generality. Because the
scheduling decisions only depend on the relative order of the priority
levels, the dynamics would be unchanged if the priorities were
transformed by any order-preserving map. In particular, suppose we
want to consider priority levels that are drawn from some other
distribution with cumulative distribution function (CDF)
$F(\cdot)$. Consider two distinct customers $i$ and $j$ with priority
levels $p_i$ and $p_j$ drawn from $U([0,1]$. Consider the transformed
priority levels $\tilde p_i = F^{-1}(p_i)$ and $\tilde p_j =
F^{-1}(p_j)$ where $F^{-1}(\cdot)$ is the quantile function associated
with $F(\cdot)$:
\begin{equation}
  F^{-1}(p) = \inf \set{x \in \Rbb : p \leq F(x)}
\end{equation}
If $p_i > p_j$ then $\tilde p_i \geq \tilde p_j$ and we also have that
$\tilde p_i$ and $\tilde p_j$ are distributed according to the CDF
$F(\cdot)$ \cite[Theorem~2.1]{Devroye_1986}. So if $F(\cdot)$ is
strictly increasing then using $\tilde p_i$ and $\tilde p_j$ yields
the same scheduling dynamics as using $p_i$ and $p_j$. If $F(\cdot)$
is not strictly increasing, then with non-zero probability we could
have $\tilde p_i = \tilde p_j$. However, in this situation customers
$i$ and $j$ are indistinguishable and these ties can be broken in an
arbitrary fashion, e.g. randomly. Consequently, our model encompasses
arbitrary distributions of priority levels. For simplicity, we will
focus having priority levels drawn from $U([0, 1])$.

We also note that because of the memorylessness property of the
exponential distribution, if a customer is preempted then his residual
service time is still exponentially distributed with unit mean. As a
result, any choice for the state does not need to include the residual
service time of each customer in the system, merely the priority level
of each customer. Since the priority levels are drawn from a
continuum, almost surely no two customers will have the same priority.
Consequently, the state needs to encode the unique priority level of
each customer in the system. We find it convenient to encode this list
of priority levels as a point measure on $[0, 1]$. Let $\Bcal([0, 1])$
be the $\sigma$-algebra of Borel sets on $[0, 1]$. Given $B \in
\Bcal([0, 1])$ let $x_t(B)$ be the number of customers in the system
at time $t$ with priority levels contained in $B$. To write this
symbolically, let $\delta_z$ denote a Dirac measure at $z \in [0,
1]$. If there are $N$ customers in the system at time $t$ and their
priority levels are $\set{p_1, \hdots, p_N} \subset [0, 1]$, then
$x_t$ can be written as a sum of Dirac measures:
\begin{equation}
  x_t = \sum_{i=1}^N \delta_{p_i}
\end{equation}

Now consider the (non-normalized) CDF or the complementary CDF:
\begin{equation}
  X_t(p) = x_t([0, p]),\quad \bar X_t(p) = x_t((p, 1])
\end{equation}
These two function-valued stochastic processes are actually equivalent
to the measure-valued process defined above. The equivalence follows
from the fact that $\set{[0, p] : p \in [0, 1]}$ and $\set{(p, 1] : p
  \in [0, 1]}$ each form $\pi$-systems that generate
$\Bcal([0,1])$. We know that $x_t(\cdot)$ is finite because it is a
counting measure.  Hence, an elementary application the
$\pi$-$\lambda$ Theorem shows that $\set{X_t(p) : p \in [0, 1]}$ and
$\set{\bar X_t(p) : p \in [0,1]}$ each uniquely define
$x_t(\cdot)$. The definitions of $\pi$-systems and $\lambda$-systems
along with the method of uniquely extending a measure from a
$\pi$-system to a $\sigma$-algebra are standard; see
\cite[Chapter~3]{Billingsley} for details.

\section{Some Theoretical Results\label{sec:theory}}
We now analyze the equilibrium behavior of the system. First we
characterize the steady state distribution of $\bar X_t(p)$ for each
$p \in [0, 1]$. We provide a corollary that partially characterizes
the steady state distribution of $x_t(\cdot)$. We then provide
formulae for the expected sojourn time and the expected waiting time
of a customer as functions of its priority level. Each of these
results generalizes our previous results regarding the single server
case \cite{Master_ACC_2017}. As in the previous section, we rely on
standard results regarding the extension of measures from
$\pi$-systems to $\sigma$-algebras \cite[Chapter~3]{Billingsley}.

\begin{theorem}
  \label{thrm:X_bar}
  Fix any $p \in [0, 1]$, $\bar X_t(p)$ converges weakly to a random
  variable $\bar X(p)$. If $(1-p)\alpha < c$, then $\bar X(p)$ has the
  following probability mass function (PMF) on the non-negative
  integers:
  \begin{align}
    &\P(\bar X(p) = k)\\
    &= \left\{
      \begin{array}{ll}
        \left[\sum_{i=0}^{c -1} \frac{((1-p)\alpha)^i}{i!} + \frac{((1-p)\alpha)^c}{c! \times (1 - (1-p)(\alpha/c))}\right]^{-1}&, k = 0\\
        \P(\bar X(p) = 0) \times  \frac{((1-p)\alpha)^k}{k!}&, 1 \leq k \leq c\\
        \P(\bar X(p) = 0) \times \frac{((1-p)\alpha)^k}{c! \times c^{k - c}} &, k > c
      \end{array}
    \right.\nonumber
  \end{align}
  If $(1-p)\alpha \geq c$, then $\bar X(p) = \infty$ almost surely.
\end{theorem}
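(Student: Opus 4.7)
The plan is to exploit the decoupling that preemptive priority provides: the sub-population of customers with priority strictly greater than $p$ evolves as an autonomous process that does not see the lower-priority customers at all, and this sub-population is exactly an $M/M/c$ queue with arrival rate $(1-p)\alpha$ and unit service rate. Once this reduction is established, the theorem follows from the classical Erlang analysis of the $M/M/c$ queue.

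First I would argue the Poisson thinning step. Because priority levels are IID $U([0,1])$ and independent of the Poisson arrival process of rate $\alpha$, the arrivals with priority in $(p,1]$ form a Poisson process of rate $(1-p)\alpha$, independent of the arrivals with priority in $[0,p]$. Next I would argue the service-side decoupling. At any time $t$, preemptive priority ensures that the $\min(\bar X_t(p), c)$ customers with priority in $(p,1]$ occupy the top $\min(\bar X_t(p), c)$ server slots; whenever $\bar X_t(p) \geq c$, all $c$ servers are serving such customers, and otherwise each such customer has a dedicated server. Low-priority customers can never displace a high-priority one and, by memorylessness of the exponential, a preempted high-priority customer's residual service remains exponential with unit mean. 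Hence $\bar X_t(p)$ evolves exactly as the occupancy process of an $M/M/c$ queue with arrival rate $(1-p)\alpha$ and service rate $1$ per busy server, independent of everything happening at priority levels in $[0,p]$.

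Having made that reduction, the remainder is textbook. The birth-death chain with birth rate $(1-p)\alpha$ in every state and death rate $\min(k,c)$ in state $k$ is positive recurrent if and only if the traffic intensity $(1-p)\alpha/c$ is strictly less than $1$, i.e.\ $(1-p)\alpha < c$. In that regime I would solve the detailed balance equations to obtain $\pi_k = \pi_0 \cdot ((1-p)\alpha)^k/k!$ for $1 \leq k \leq c$ and $\pi_k = \pi_0 \cdot ((1-p)\alpha)^k/(c!\, c^{k-c})$ for $k > c$, then determine $\pi_0$ by normalization, where the tail sum is a geometric series in $(1-p)\alpha/c$ that converges precisely under the stated stability condition. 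This matches the displayed PMF. Standard results for positive recurrent continuous-time Markov chains then give weak convergence from any initial distribution. For the unstable case $(1-p)\alpha \geq c$, the same embedded comparison shows that $\bar X_t(p)$ stochastically dominates an $M/M/c$ occupancy process with traffic intensity at least $1$, which is transient or null-recurrent; a standard drift/coupling argument shows $\bar X_t(p) \to \infty$ in probability, and since the limiting law is supported on $\{+\infty\}$, weak convergence delivers $\bar X(p) = \infty$ almost surely.

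The main obstacle is the decoupling step rather than the Erlang calculation. It is conceptually clear but needs to be stated carefully: one must verify that the transition rates of $\bar X_t(p)$ depend only on its current value and not on $X_t(p)$, which in turn requires using both the thinning of the Poisson arrivals and the memorylessness of preempted services. Everything downstream is the classical $M/M/c$ computation plus a routine transience argument for the supercritical case.
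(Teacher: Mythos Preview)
Your proposal is correct and follows essentially the same approach as the paper: both argue that preemptive scheduling and Poisson thinning make $\bar X_t(p)$ stochastically equivalent to the occupancy of an $M/M/c$ queue with arrival rate $(1-p)\alpha$ and unit service rate, then invoke the standard Erlang stationary distribution (or instability) for that chain. The only difference is that you spell out the detailed-balance computation and the decoupling more carefully, whereas the paper simply cites a textbook for the $M/M/c$ PMF.
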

\begin{proof}
  As in \cite{Master_ACC_2017}, the key is to notice that because of
  the preemptive scheduling, the customers with priority levels in
  $(p, 1]$ are not affected in any way by customers with priority
  levels in $[0, p]$. Moreover, because the priority levels are
  independent of the inter-arrival times, the customers with priority
  levels in $(p, 1]$ arrive according to a Poisson process with rate
  $(1-p)\alpha$. As a result, $\bar X_t(p)$ is stochastically equivalent
  to the population in an $M/M/c$ queue with unit rate servers and
  arrival rate $(1-p)\alpha$. As a result, $\bar X_t(p)$ converges
  weakly to a random variable with the given PMF
  \cite[Chapter~3]{Kleinrock}.  Because there is no upper bound on
  $\alpha$, it is possible that $(1-p)\alpha \geq c$. In this case, the
  equivalent $M/M/c$ queue is not stable and hence $\bar X_t(p)$
  diverges to infinity.
\end{proof}

\begin{remark}
  When $c = 1$ and $(1-p)\alpha < c$ we have that
  \begin{equation}
    \P(\bar X(p) = k) = (1 - (1-p)\alpha) ((1-p)\alpha)^k
  \end{equation}
  for all non-negative integers $k$. In other words, $\bar X(p)$ is a
  geometric random variable on the non-negative integers with mean
  $(1-p)\alpha / (1 - (1-p)\alpha)$. Hence, this result generalizes our
  previous work \cite[Theorem~1]{Master_ACC_2017}.
\end{remark}

\begin{definition}
  For convenience, we define 
  \begin{align}
    P_0(p) %
    &= \P(\bar X(p) = 0) \nonumber\\
    &= \left[\sum_{i=0}^{c -1} \frac{((1-p)\alpha)^i}{i!} + \frac{((1-p)\alpha)^c}{c! \times (1 - (1-p)(\alpha/c))}\right]^{-1}
  \end{align}
  when $(1 - p)\alpha < c$. We also define 
  \begin{align}
    p_0(p) %
    &= -\frac{d}{dp}P_0(p)\\
    &= -P_0(p)^2\Bigg[\sum_{i=1}^{c-1} \frac{i (1-p)^{i-1}\alpha^i}{i!} %
    + \frac{c(1-p)^{c-1}\alpha^c}{c!(1 - (1-p)(\alpha/c))} \nonumber\\
    &\quad\quad\quad\quad + \frac{(1-p)^c \alpha^{c+1}}{c\times c! (1 - (1-p)(\alpha/c))^2}\Bigg]\nonumber
  \end{align}
\end{definition}

\begin{corollary}
  \label{cor:x}
  Fix $B \in \Bcal([0, 1])$. Then $x_t(B)$ converges weakly to a
  random variable $x(B)$ with mean
  \begin{equation}
    \mu(B) = \E[x(B)] = \int_B m(p) dp
  \end{equation}
  where 
  \begin{align}
    m(p) &= \alpha +\\
    &\frac{\alpha^{c+1}}{c\times c!}\Bigg[%
    \frac{(c + 1)(1-p)^c P_0(p) + (1-p)^{c+1} p_0(p)}{(1 - (1-p)(\alpha/c))^2} \nonumber\\
    &\quad\quad\quad\quad+%
    \frac{2(1-p)^{c+1} P_0(p) (\alpha/c)}{(1 - (1-p)(\alpha/c))^3}\Bigg]\nonumber
  \end{align}
  when $(1-p)\alpha < c$ and $m(p) = \infty$ otherwise.
\end{corollary}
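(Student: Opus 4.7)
My plan is to identify $\mu$ on the $\pi$-system $\{(p, 1] : p \in [0, 1]\}$ using Theorem~\ref{thrm:X_bar}, differentiate to extract the density $m$, and then extend to all Borel sets by the $\pi$-$\lambda$ argument from the end of Section~\ref{sec:model}. For the weak convergence assertion, note that $x_t((p, 1]) = \bar X_t(p)$, so the case $B = (p, 1]$ is immediate from Theorem~\ref{thrm:X_bar}; for general Borel $B$ one lifts this to convergence of $x_t$ as a measure-valued Markov process to its unique stationary distribution (positive recurrence of the total-population $M/M/c$ process gives this when $\alpha < c$, and joint convergence of $(\bar X_t(p_1), \dots, \bar X_t(p_k))$ on stable ranges $(1-p_i)\alpha < c$ handles the unstable case through $x_t((p_i, p_{i+1}]) = \bar X_t(p_i) - \bar X_t(p_{i+1})$).

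The next step is to compute $\mu((p, 1]) = \E[\bar X(p)]$ directly from the PMF in Theorem~\ref{thrm:X_bar}. Splitting the sum $\sum_k k \P(\bar X(p) = k)$ at $k = c$, the first block reduces through $\sum_{k=1}^{c} k \lambda^k / k! = \lambda \sum_{j=0}^{c-1} \lambda^j / j!$, which telescopes against the definition $P_0(p)^{-1}$; the tail is a pair of standard geometric sums in $r = (1-p)\alpha/c$. After cancellation this yields
\begin{equation*}
\mu((p, 1]) = (1-p)\alpha + \frac{P_0(p)\,((1-p)\alpha)^{c+1}}{c \cdot c! \,(1 - (1-p)(\alpha/c))^2}
\end{equation*}
on the stable region $(1-p)\alpha < c$. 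Since this is smooth in $p$ there, the Fundamental Theorem of Calculus identifies the density as $m(p) = -\frac{d}{dp}\mu((p, 1])$.

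The main obstacle is then the bookkeeping in this derivative. The first summand contributes $\alpha$ after the sign flip. For the second, apply the quotient rule using $P_0'(p) = -p_0(p)$, $\frac{d}{dp}((1-p)\alpha)^{c+1} = -(c+1)\alpha^{c+1}(1-p)^c$, and $\frac{d}{dp}(1 - (1-p)(\alpha/c))^2 = 2(\alpha/c)(1 - (1-p)(\alpha/c))$. Grouping terms by the power of the denominator yields one bracketed piece over $(1 - (1-p)(\alpha/c))^2$ (from differentiating the numerator factors $P_0(p)$ and $((1-p)\alpha)^{c+1}$) and one over $(1 - (1-p)(\alpha/c))^3$ (from differentiating the squared denominator), matching the stated expression for $m(p)$. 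Finally, both $\mu(\cdot)$ and $B \mapsto \int_B m(p)\,dp$ are measures agreeing on the generating $\pi$-system $\{(p, 1]\}$, so by the $\pi$-$\lambda$ theorem they coincide on all of $\Bcal([0, 1])$; in the region where $m(p) = \infty$ the same identity forces $\mu(B) = \infty$, consistent with $\bar X(p) = \infty$ from Theorem~\ref{thrm:X_bar}.
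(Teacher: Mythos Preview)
Your proposal is correct and follows essentially the same route as the paper: compute $\E[\bar X(p)]$ from the PMF in Theorem~\ref{thrm:X_bar}, identify $m(p) = -\frac{d}{dp}\E[\bar X(p)]$, and extend from the generating $\pi$-system to $\Bcal([0,1])$ via the $\pi$-$\lambda$ theorem. The paper works with closed intervals $[a,b]$ rather than half-open $(p,1]$, and is terser about both the PMF summation and the weak-convergence claim for general $B$, but the structure and ingredients are the same.
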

\begin{proof}
  We can use the PMF from the previous theorem to show that
  \begin{align}
    \E[\bar X(p)] = (1-p)\alpha + \frac{\alpha^{c+1} (1-p)^{c+1} P_0(p)}{c \times c! \times (1 - (1-p)(\alpha/c))^2}.
  \end{align}
  This is the average number of customers in an $M/M/c$ queue with
  arrival rate $(1-p)\alpha$ and unit service rate.  Therefore, if $B
  = [a, b]$ for some $0 \leq a < b \leq 1$, then $x_t(B) = \bar X_t(a)
  - \bar X_t(b)$. Since $x_t([a, b])$ converges weakly to $\bar X(a) -
  \bar X(b)$, performing the integration gives us the same result
  subtracting $\E[\bar X(b)]$ from $\E[\bar X(a)]$. Indeed, note that
  for $p$ such that $(1-p)\alpha < c$,
  \begin{equation}
    m(p) = -\frac{d}{dp} \E[\bar X(p)].
  \end{equation}
  Now note that intervals of this form are a $\pi$-system that
  generates $\Bcal([0, 1])$. Consequently, if $\alpha < c$ then
  $\mu([0,1]) < \infty$ and so this defines a unique measure on
  $\Bcal([0, 1])$. On the other hand, if $\alpha \geq c$, we can still
  extend the measure from the $\pi$-system to $\Bcal([0, 1])$, but
  uniqueness is no longer guaranteed. However, we can apply the same
  reasoning as above to define a unique measure on
  $\Bcal([1-c/\alpha,1])$ where $\mu(\cdot)$ is finite. The fact that
  $\mu(B) = \infty$ for any $B$ such that $B \cap [0,1 - c/\alpha]$
  has non-zero Lebesgue measure follows from the instability argument
  in the previous theorem.  Hence, regardless of the value of
  $\alpha$, we can conclude that the expression for the mean
  equilibrium behavior of $x_t(B)$ holds for any $B \in \Bcal([0,1])$.
\end{proof}

\begin{remark}
  When $c = 1$ and $(1-p)\alpha < c$ we have that
  \begin{equation}
    m(p) = \frac{\alpha}{(1 - (1-p)\alpha)^2}
  \end{equation}
  so the corollary generalizes the results in our previous work
  \cite{Master_ACC_2017}.
\end{remark}

Because service can be preempted and hence customers can enter service
multiple times, we formally define the terms ``sojourn time'' and
``waiting time''. In particular, we note that the amount of time a
customer spends in service before being preempted is considered
waiting. We used the same definitions in our prior
work~\cite{Master_ACC_2017}.
\begin{definition}
  A customer's sojourn time is the amount of time from when the customer arrives
  to when it departs after completing service.
\end{definition}
\begin{definition}
  A customer's waiting time is the amount of time from when the customer arrives
  to the beginning of the last time the customer enters service.
\end{definition}

\begin{theorem}
  Fix any $p \in [0, 1]$ and let $s(p)$ be the expected sojourn time
  for a customer with priority $p$ in steady state. Then if
  $(1-p)\alpha < c$ then
  \begin{equation}
    s(p) = \frac{1}{\alpha}m(p)
  \end{equation}
  and if $(1-p)\alpha \geq c$ then $s(p) = \infty$.
\end{theorem}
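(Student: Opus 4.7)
The plan is to invoke Little's Law on a vanishingly thin slice of priorities around $p$. Because priorities are IID $U([0,1])$ and independent of the arrival epochs, standard Poisson thinning implies that the customers whose priorities fall in any Borel set $B \subseteq [0,1]$ form a Poisson arrival stream of rate $\alpha \lambda(B)$, where $\lambda$ denotes Lebesgue measure. Corollary~\ref{cor:x} already identifies the steady state mean number of such customers in the system as $\mu(B) = \int_B m(q)\, dq$.

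First I would handle the stable case $(1-p)\alpha < c$. Choose $\epsilon > 0$ small enough that $(1-q)\alpha < c$ for every $q \in B_\epsilon := [p, p+\epsilon]$, and apply Little's Law to the sub-population of customers whose priorities lie in $B_\epsilon$. The mean sojourn time averaged over this sub-population is
\begin{equation}
\frac{\mu(B_\epsilon)}{\alpha \epsilon} = \frac{1}{\alpha \epsilon} \int_p^{p+\epsilon} m(q)\, dq.
\end{equation}
Since $m$ is continuous at $p$ (immediate from its explicit formula), letting $\epsilon \downarrow 0$ yields $m(p)/\alpha$. To identify this limit with $s(p)$, note that the same Little's Law reasoning gives the slice-average as $\frac{1}{\epsilon}\int_{B_\epsilon} s(q)\, dq$ once $s(\cdot)$ is viewed as the conditional expected sojourn time given priority level; Lebesgue differentiation then pins down $s(p) = m(p)/\alpha$ at every point of continuity of $m$.

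For the unstable case $(1-p)\alpha \geq c$, I would appeal directly to Theorem~\ref{thrm:X_bar}: in steady state there are almost surely infinitely many customers with priority strictly greater than $p$. Under preemptive scheduling a customer with priority $p$ can be in service only when fewer than $c$ strictly higher-priority customers are present, an event whose stationary probability is zero. Hence a tagged arrival in steady state almost surely never completes service, giving $s(p) = \infty$. A sample-path argument starting from any finite initial condition shows that the backlog of higher-priority customers grows without bound, so the expected sojourn time is infinite regardless of initial state, including the boundary case $(1-p)\alpha = c$ where the analogous $M/M/c$ queue is null-recurrent.

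The main obstacle I anticipate is the slice-to-point identification. Little's Law naturally delivers an average sojourn time across a population of customers with a range of priorities, not the expected sojourn time conditional on a single priority level, so some care is needed to equate the $\epsilon \downarrow 0$ limit of slice-averages with $s(p)$. The algebraic step is by contrast routine; the relation $m(p) = -\frac{d}{dp}\E[\bar X(p)]$ noted in the proof of Corollary~\ref{cor:x} makes the underlying differentiation completely transparent.
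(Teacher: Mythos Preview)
Your proposal is correct and follows the same strategy as the paper: apply Little's Law to a sub-population of priorities and then pass from the integral identity to the pointwise statement by differentiation. The only cosmetic difference is that the paper works with the half-line $(p,1]$ rather than a thin slice $[p,p+\epsilon]$, obtaining $\int_p^1 s(q)\,dq = \tfrac{1}{\alpha}\int_p^1 m(q)\,dq$ for every $p$ and thereby sidestepping the slice-to-point identification you flag as the main obstacle.
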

\begin{proof}
  The case for which $s(p) = \infty$ follows trivially from the
  instability argument in Theorem~\ref{thrm:X_bar}.

  For the nontrivial case, we first consider $\bar S(p)$, the average
  sojourn time for all customers with priority levels in $(p, 1]$. The
  law of total probability tells us that
  \begin{equation}
    \bar S(p) = \int_p^1 s(q) \frac{1}{1 - p} dq.
  \end{equation}
  Now we apply Little's Law \cite{Little_1961}. Customers with
  priority levels in $(p, 1]$ arrive at a rate $(1-p)\alpha$ so we
  have that
  \begin{equation}
    \E[\bar X(p)] = (1-p)\alpha \bar S(p) = \alpha \int_p^1 s(q) dq.
  \end{equation}
  The corollary gives us a formula for $\E[\bar X(p)]$:
  \begin{equation}
    \int_p^1 s(q)dq = \frac{1}{\alpha}\E[\bar X(p)] = \frac{1}{\alpha}\int_p^1 m(q)dq
  \end{equation}
  Since this holds for any $p$, we have that $s(p) = m(p)/\alpha$.
\end{proof}

\begin{corollary}
  Fix any $p \in [0, 1]$ and let $w(p)$ be the expected waiting time
  for a customer with priority $p$ in steady state to receive
  service. If $(1-p)\alpha < c$ then
  \begin{equation}
    w(p) = s(p) - 1 = \frac{1}{\alpha}m(p) - 1
  \end{equation}
  and if $(1-p)\alpha \geq c$ then $w(p) = \infty$.
\end{corollary}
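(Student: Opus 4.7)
The plan is to decompose the sojourn time into two pieces: the waiting time (as defined) plus the duration of the final, uninterrupted service segment that ends with the customer's departure. Since the definitions of sojourn time and waiting time share the arrival instant as their starting point, and since sojourn ends at departure while waiting ends at the beginning of the last service entry, the difference sojourn $-$ waiting is exactly the length of that final service segment.

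The key step is then to argue that this final service segment has mean $1$. By the memorylessness of the exponential distribution with unit mean, at any instant a customer enters service its remaining service time requirement is distributed as a unit-mean exponential, independent of the past. In particular, when the customer enters service for the \emph{last} time, the elapsed time until departure is a unit-mean exponential random variable, regardless of how many times the customer was previously preempted and how long those interrupted service intervals were. Taking expectations and applying linearity gives $s(p) = w(p) + 1$, which rearranges to the stated formula once we substitute the expression for $s(p)$ from the preceding theorem.

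For the unstable case $(1-p)\alpha \geq c$, I would simply note that $s(p) = \infty$ by the preceding theorem, and since $w(p) \geq s(p) - 1$ (equivalently, since waiting is bounded above by sojourn), we obtain $w(p) = \infty$ as well; alternatively one can cite the same instability argument used in Theorem~\ref{thrm:X_bar}.

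The only mildly delicate point is justifying the independence/identification of the ``last'' service segment as an Exp$(1)$ random variable, since the event that a given service entry is the \emph{last} one is defined by future evolution. The cleanest way around this is to observe that, conditional on the customer's entire sample path of interruptions and re-entries into service, the cumulative service received must equal a unit-mean exponential (the total service requirement), and by memorylessness the portion received in any given service segment is exponential with mean $1$ truncated by the next preemption event; the final segment is, by definition, the one that completes the Exp$(1)$ requirement, so its length is the remaining service time at last entry, which is again Exp$(1)$ by memorylessness. This is the step I expect to require the most care to state precisely, though once accepted the remainder of the argument is a one-line subtraction.
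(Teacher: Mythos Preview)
Your proposal is correct and follows essentially the same approach as the paper: decompose the sojourn time as waiting time plus the final uninterrupted service segment, then use memorylessness of the unit-mean exponential to conclude that this final segment has mean $1$, so $w(p)=s(p)-1$. The paper's own proof compresses this into a single sentence (``the sojourn time is the sum of the waiting time and the service time''), whereas you have unpacked the memorylessness step more carefully and flagged the subtlety about the ``last'' entry being defined via the future; that extra care is warranted but does not change the underlying argument.
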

\begin{proof}
  The sojourn time is the sum of the waiting time and the service
  time. Since we have a unit service rate, we merely subtract 1 from
  $s(p)$ to get $w(p)$.
\end{proof}

\begin{remark}
  We note that the relationships between $m(\cdot)$, $s(\cdot)$, and
  $w(\cdot)$ are the same as they were in the single server case
  \cite{Master_ACC_2017}. Consequently, the previous theorem and
  corollary generalize the results from our previous work.
\end{remark}

\begin{remark}
  If $\alpha \geq c$ then $m(\cdot)$ (and hence both $s(\cdot)$ and
  $w(\cdot)$) exhibit a bifurcation, i.e. a qualititative change in
  behavior, at
  \begin{equation}
    p^* = 1 - \frac{c}{\alpha}.
  \end{equation}
  It is intuitive that when the system is overloaded, lower priority
  customers will be ignored so that higher priority customers can be
  served. The quantity $p^*$ makes this intuition precise: when the
  system is overloaded, customers with priority levels in $[0, p^*]$
  will have infinite expected waiting times while customers in $(p^*,
  1]$ will have finite expected waiting times.
\end{remark}

\begin{remark}
  The aforementioned birfurcation makes the case of $\alpha = c$
  particularly interesting. We know that when $\alpha = c$ the $M/M/c$
  is unstable. However, in this case $p^* = 0$ so all customers with
  priority levels in $(0, 1]$ have a finite sojourn time while only
  customers with priority levels equal to zero have infinite sojourn
  times. This seems a bit paradoxical: the queue is unstable but
  almost every customer has a finite sojourn time. This
  counterintuitive result arises because $\alpha = c$ is the critical
  point between a stable $M/M/c$ queue and an unstable $M/M/c$ queue.
\end{remark}

\begin{remark}
  The previous remarks highlight the fact that this infinite
  dimensional priority scheduling scheme can be used to ``partially
  stabilize'' an unstable single class queueing system in the
  following sense. If we have a single class $M/M/c$ system with
  $\alpha \geq c$ that is scheduled in either a last-come-first-serve
  (LCFS) or first-come-first-serve (FCFS) manner, then we know that
  the overall population of the queue will be unstable and we cannot
  provide any guarantee of reasonable service to any of the
  customers. If we instead randomly assign priority levels to arriving
  customers and schedule preemptively according to these priority
  levels, then we can guarantee that $c/\alpha$ of the customers can
  expect to have finite waiting times. Moreover, upon arrival we can
  say with certainty exactly which customers will have this guarantee.
\end{remark}

\section{Simulation Verification\label{sec:sim}}
In this section, we report the results of two discrete event
simulations of the system: one with $\alpha < c$ and one with $\alpha
\geq c$.  In both cases, we use the simulated data to estimate
$m(\cdot)$, $s(\cdot)$, and $w(\cdot)$. In general, we see that the
estimates match our theoretical results, thus supporting our analysis.

\subsection{Estimation Methods}
For each of the functions that we estimate, we first get local
estimates and we then linearly interpolate to estimate the entire
function. The details for each function are outlined below and are the
same as in our previous work~\cite{Master_ACC_2017}. For all
functions, we assume a discretization of $0 < \delta < 1$ with an
integer $N_\delta = \delta^{-1}$.

We compute our estimate of $m(\cdot)$, which we denote $\hat
m(\cdot)$, as follows:
\begin{enumerate}
\item Because ``Poisson Arrivals See Time
  Averages''~\cite{Wolff_1982}, we record $x_t(\cdot)$ as observed
  immediately before each arrival.
\item For $p_i \in \set{\delta/2 + i\delta }_{i=0}^{N_\delta - 1}$, we
  average the number of customers with priority levels in the
  half-open interval $[p_i - \delta/2, p_i + \delta/2)$ across our
  observations. We scale this average by $N_\delta$ to get $\hat
  m(p_i)$.
\item We linearly interpolate $\set{\hat m(p_i)}_{i=0}^{N_\delta - 1}$
  to get $\hat m(\cdot)$.
\end{enumerate}

We compute our estimate of $s(\cdot)$, which we denote $\hat
s(\cdot)$, in a similar fashion:
\begin{enumerate}
\item We record the arrival time, the departure time, and the priority
  level of each customer. If a customer does not depart in the time
  horizon, then his departure time is infinite.
\item For $p_i \in \set{\delta/2 + i\delta}_{i=0}^{N_\delta - 1}$, we
  average the sojourn times for customers with priority levels in the
  half-open interval $[p_i - \delta/2, p_i + \delta/2)$. This gives us
  $\hat s(p_i)$.
\item We linearly interpolate $\set{\hat s(p_i)}_{i=0}^{N_\delta - 1}$
  to get $\hat s(\cdot)$.
\end{enumerate}

We compute our estimate of $w(\cdot)$, which we denote $\hat
w(\cdot)$, in a similar fashion:
\begin{enumerate}
\item We record the arrival time, the last time that the customer
  enters service before departing, and the priority level of each
  customer. If the customer never departs then the departure time is
  infinite.
\item For $p_i \in \set{\delta/2 + i\delta}_{i=0}^{N_\delta - 1}$, we
  average the waiting times for customers with priority levels in the
  half-open interval $[p_i - \delta/2, p_i + \delta/2)$. This gives us
  $\hat w(p_i)$.
\item We linearly interpolate $\set{\hat w(p_i)}_{i=0}^{N_\delta - 1}$
  to get $\hat w(\cdot)$.
\end{enumerate}

\subsection{Estimation Results}
We use $\delta = 0.05$ and a time horizon of $T = 2 \times 10^3$. We
fix $c = 2$ servers and consider two values of $\alpha$. When $\alpha
= 1.5$ we have a stable system and when $\alpha = 5.0$ we have an
unstable system.

First we consider the stable case in which $m(\cdot)$, $s(\cdot)$, and
$w(\cdot)$ are finite. The results are plotted in
Fig.~\ref{fig:stable}. Though a bit ``noisy'', the estimates generally
agree with our theoretical analysis. Moreover, we see that the
estimates have roughly the same shape and merely differ by constant
factors. This confirms our previous analysis regarding the mean
equilibrium behavior of $x_t(\cdot)$, the expected sojourn time, and
the expected waiting time.

\begin{figure}
  \subfloat[$\hat m(\cdot)$ vs. $m(\cdot)$\label{fig:m}]{%
    \includegraphics[width=0.45\textwidth]{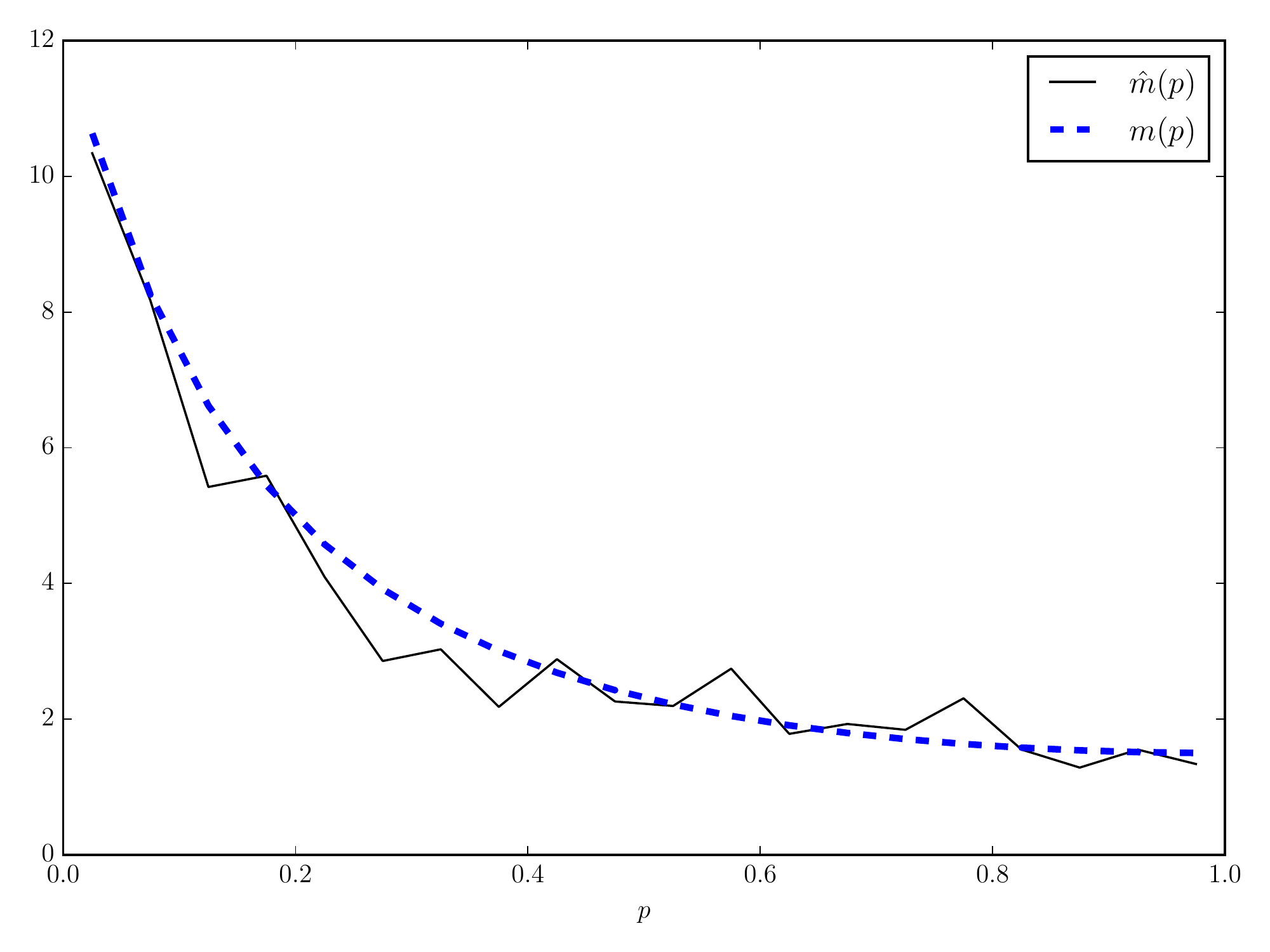}}

  \subfloat[$\hat s(\cdot)$ vs. $s(\cdot)$\label{fig:s}]{%
    \includegraphics[width=0.45\textwidth]{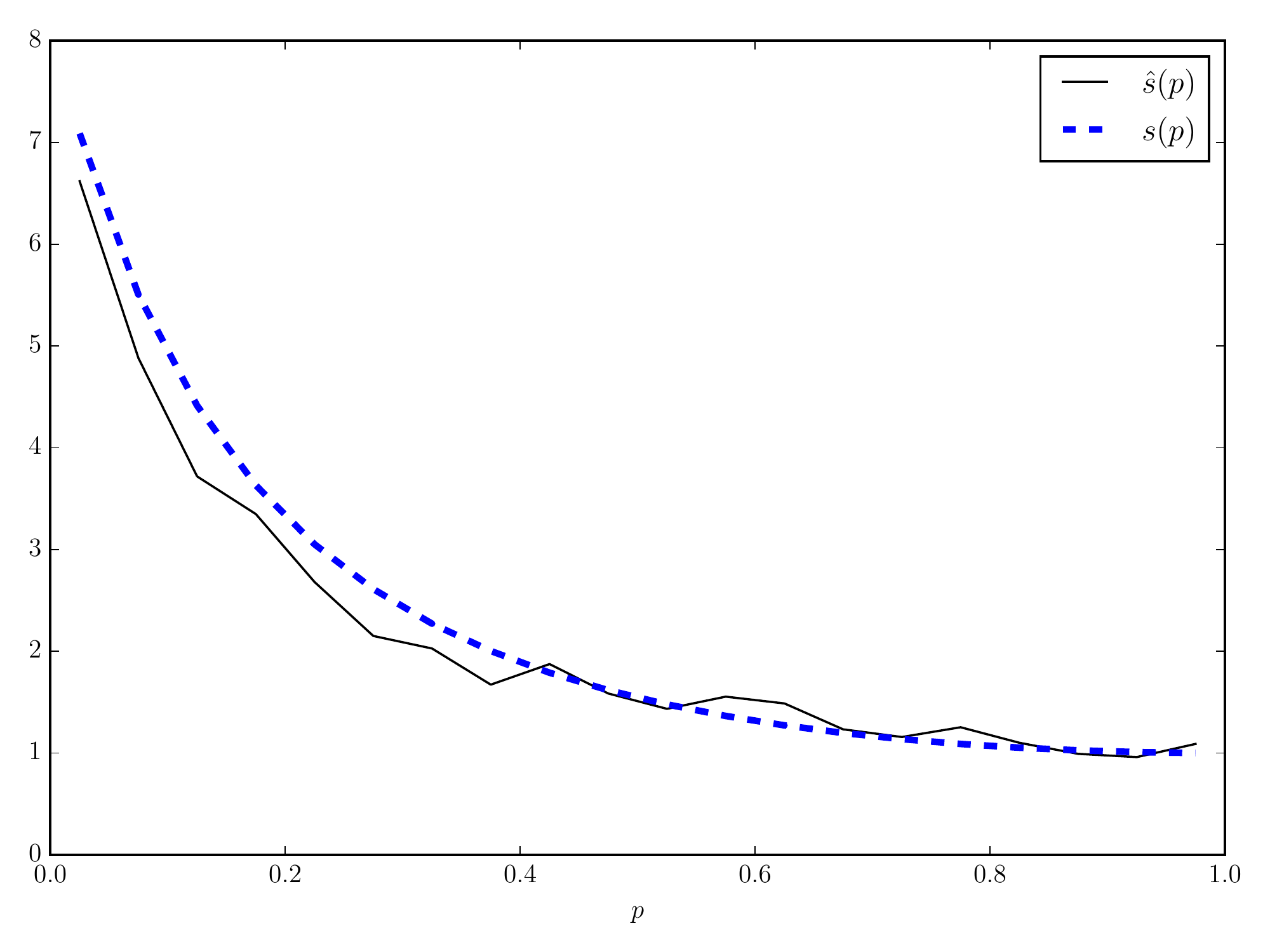}}

  \subfloat[$\hat w(\cdot)$ vs. $w(\cdot)$\label{fig:w}]{%
    \includegraphics[width=0.45\textwidth]{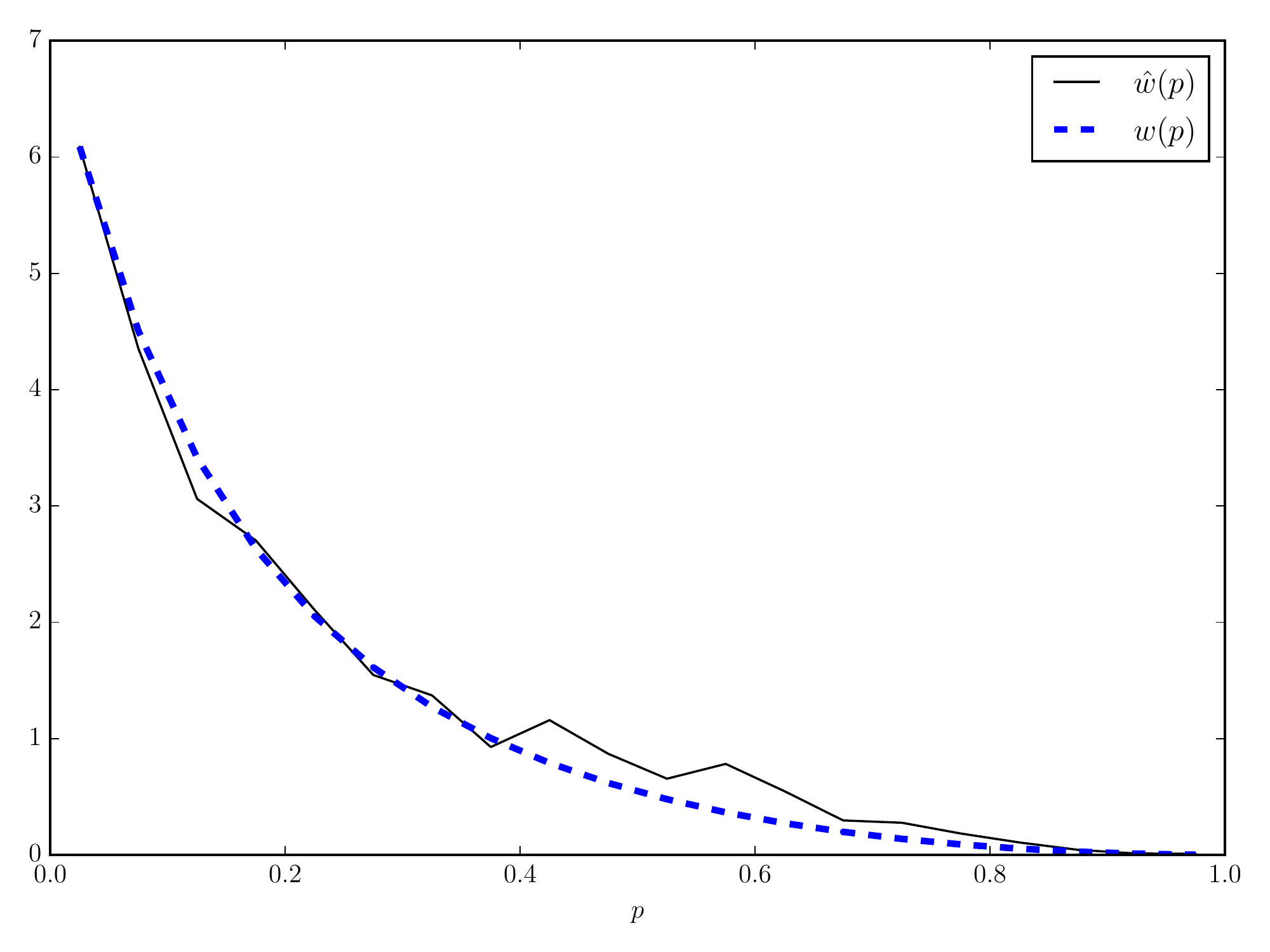}}

  \caption{Estimates of $m(\cdot)$, $s(\cdot)$, and $w(\cdot)$ based
    on the data generated by simulating the system with $c = 2$ and
    $\alpha = 1.5$. For these values of $c$ and $\alpha$, the queue is
    stable and so we use a linear scale for both
    axes.\label{fig:stable}}
\end{figure}


Now consider the unstable case for which $m(\cdot)$, $s(\cdot)$, and
$w(\cdot)$ are finite only for $p \in (p^*, 1] = (0.6, 1]$. As a
result, we do not plot the functions for $p < p^*$. Because of the
vertical asymptote at $p^*$, we use a log-scale for the vertical
axis. The results are plotted in Fig.~\ref{fig:unstable}. In
Fig.~\ref{fig:m_unstable}, we see that $\hat m(\cdot)$ and $m(\cdot)$
seem to agree on $(p^*, 1]$. Fig.~\ref{fig:m_unstable} also depicts
the bifurcation at $p^*$.  We see that $\hat m(p^* - \delta/2)$ is
roughly 10 times the value of $\hat m(p^* + \delta/2)$. This reflects
the fact that $\hat m(p)$ will diverge to infinity as $T \uparrow
\infty$ for $p < p^*$. We see similar results regarding $\hat
s(\cdot)$ in Fig.~\ref{fig:s_unstable}. For $p \in (p^*, 1]$, $\hat
s(p)$ and $s(p)$ agree. Note that for $p < p^*$, neither $\hat s(p)$
nor $s(p)$ appear on the plot because both quantities are
infinite. Hence, we see that $\hat s(\cdot)$ and $s(\cdot)$ agree for
all $p \in [0, 1]$. We see the same results for $\hat w(\cdot)$: the
estimate agrees with the analytic result where both are finite and
also where both are infinite.

\begin{figure}
  \subfloat[$\hat m(\cdot)$ vs. $m(\cdot)$\label{fig:m_unstable}]{%
    \includegraphics[width=0.45\textwidth]{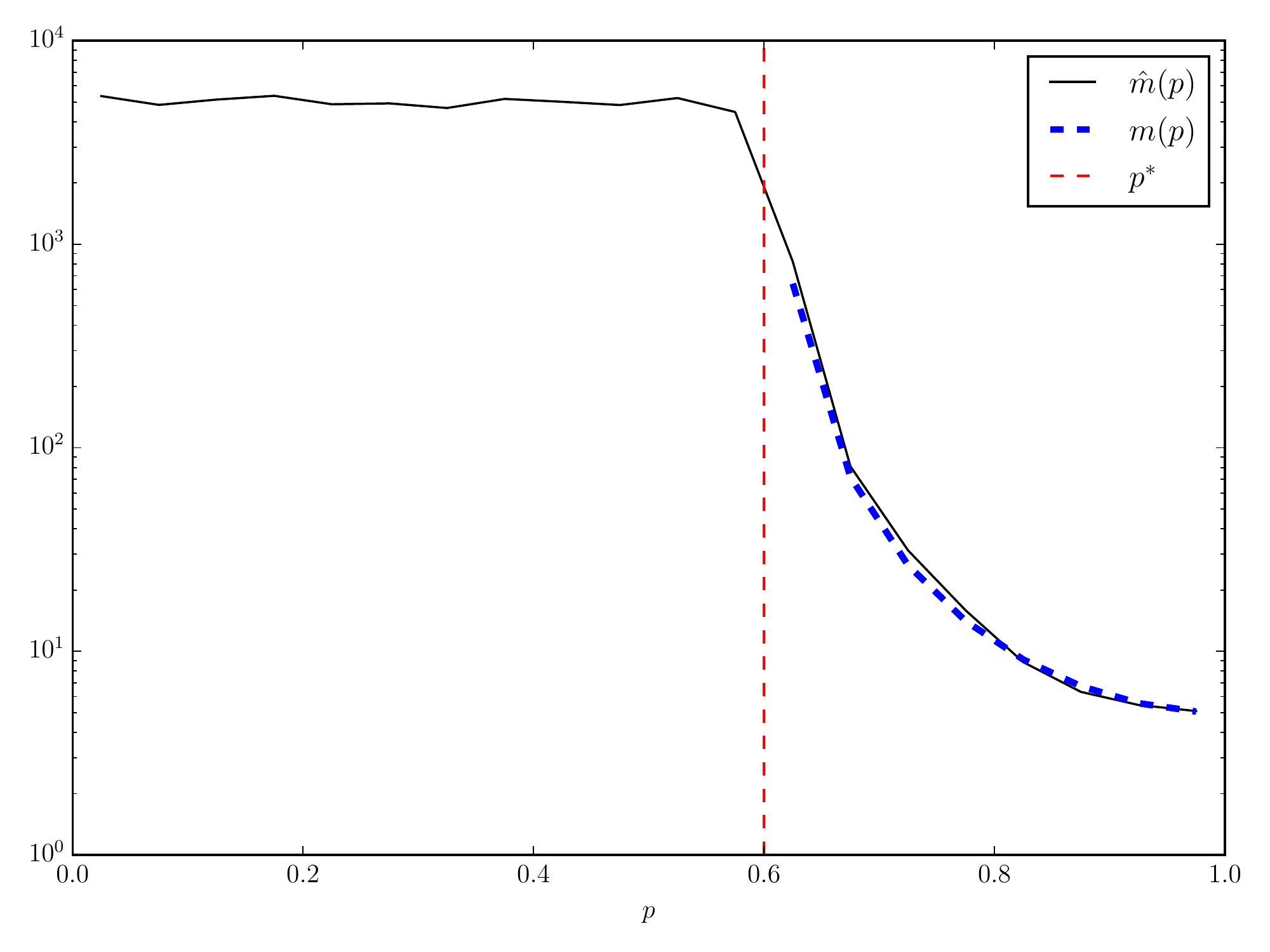}}

  \subfloat[$\hat s(\cdot)$ vs. $s(\cdot)$\label{fig:s_unstable}]{%
    \includegraphics[width=0.45\textwidth]{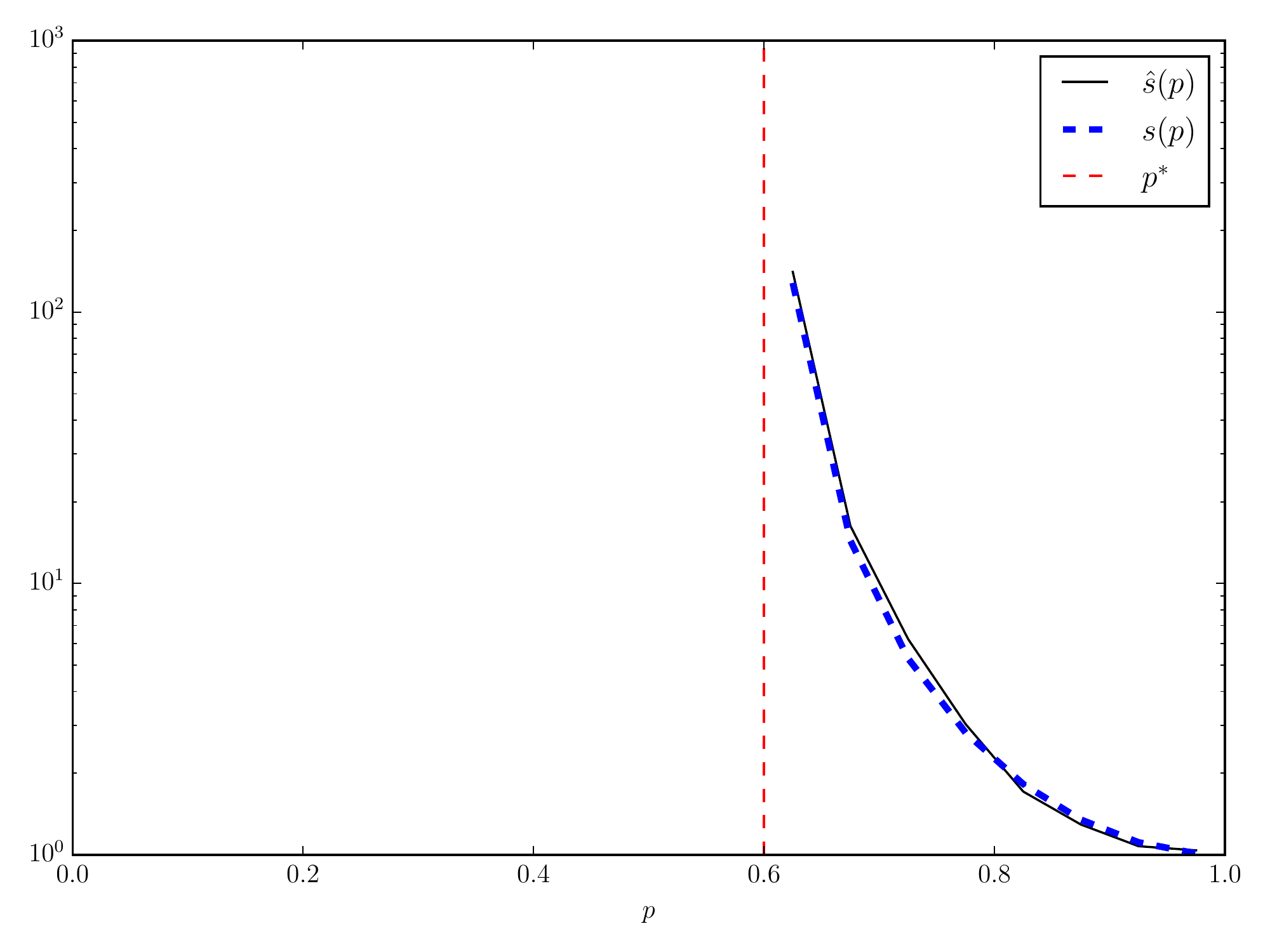}}

  \subfloat[$\hat w(\cdot)$ vs. $w(\cdot)$\label{fig:w_unstable}]{%
    \includegraphics[width=0.45\textwidth]{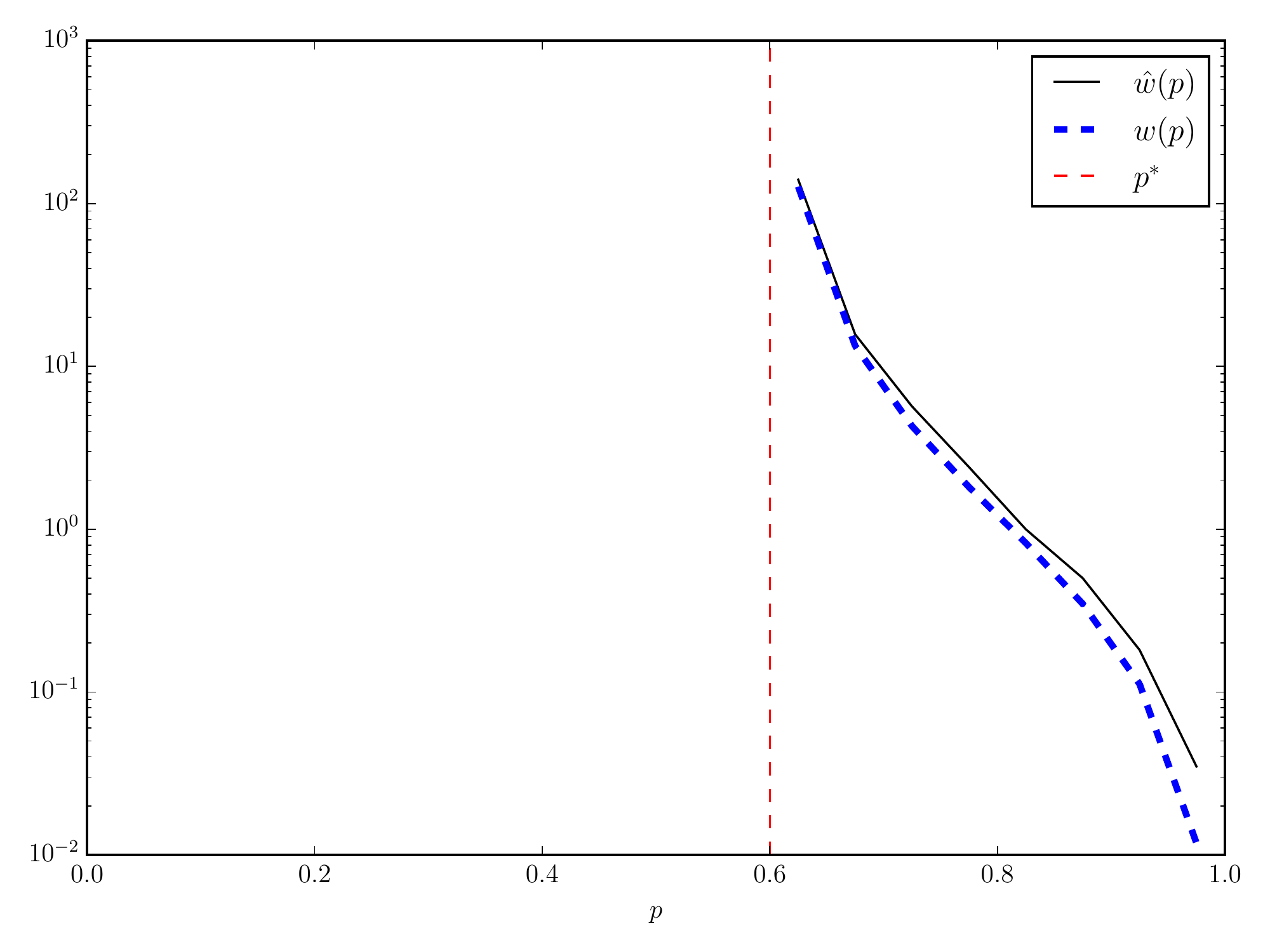}}

  \caption{Estimates of $m(\cdot)$, $s(\cdot)$, and $w(\cdot)$ based
    on the data generated by simulating the system with $c = 2$ and
    $\alpha = 5.0$. Because the queue is unstable, the values of the
    functions become quite large and hence we opt to use a logarithmic
    vertical axis.\label{fig:unstable}}
\end{figure}


\section{Future Work\label{sec:future}}
Our work points to several potential directions of future work. One is
to derive more results about the current model. For example, it would
be interesting to know more about the higher order statistics of
$x(\cdot)$. It would also be interesting to extend this model to a
network setting. With a single queue, $x_t(\cdot)$ is a point measure
on $[0, 1]$ but for a system with $n$ queues we would need to have
$x_t(\cdot)$ be a point measure on $[0, 1]^n$. It seems reasonable to
expect that the steady state distribution would have a product-form as
in Jackson's Theorem~\cite{Jackson_1963}, but the details of the
analysis are not immediately clear. In particular, although the
arriving priority levels are IID $U([0, 1])$ we need to know how
customers' priority levels are correlated after they depart.

As noted in our previous work~\cite{Master_ACC_2017}, it may also be
interesting to consider a heavy traffic analysis. Priority queues are
an example of a system that exhibits ``state-space collapse'' in heavy
traffic~\cite{Reiman_1984}. In brief, we would see that upon
appropriate rescaling, the diffusion limit associated with $X_t(p)$
for $p < p^*$ would be zero. However, it may be possible to consider a
diffusion limit for which $p^* \downarrow 0$ so that the diffusion
limit does not collapse to zero. This idea is not yet well developed
but since our analysis applies to overloaded queues, it may fruitful
to consider.

\section{Conclusions\label{sec:conclusions}}
We have presented an infinite dimensional model for a many server
priority queue in which customers are scheduled preemptively according
to priority levels that are drawn from an continuous probability
distribution. Our steady state analysis characterizes the first-order
statistics of the measure-valued process that describes the priority
levels of the customers in the queue. We have used derived formulae
for the expected sojourn and waiting times of a function of customer
priority level. These results generalize our previous work
\cite{Master_ACC_2017} and contribute to a broader literature on
preemptive scheduling with random priorities
\cite{Haviv_2016}. Discrete event simulations verify our analytical
results and we have discussed some areas of future work.

\bibliographystyle{ieeetr}
\bibliography{NMaster_ZZhou_NBambos_CISS2017}

\begin{thebibliography}{10}

\bibitem{Shin_2001}
J.~Shin, J.~W. Kim, and C.-C. Kuo, ``Quality-of-service mapping mechanism for
  packet video in differentiated services network,'' {\em IEEE Transactions on
  Multimedia}, vol.~3, no.~2, pp.~219--231, 2001.

\bibitem{Semeria_2001}
C.~Semeria, ``Supporting differentiated service classes: queue scheduling
  disciplines,'' {\em {Juniper Networks White Paper}}, December 2001.

\bibitem{Green_2006}
L.~Green, ``Queueing analysis in healthcare,'' in {\em Patient flow: reducing
  delay in healthcare delivery}, pp.~281--307, Springer, 2006.

\bibitem{Cont_2010}
R.~Cont, S.~Stoikov, and R.~Talreja, ``A stochastic model for order book
  dynamics,'' {\em Operations research}, vol.~58, no.~3, pp.~549--563, 2010.

\bibitem{Jaiswal_Book}
N.~K. Jaiswal, {\em Priority queues}, vol.~50.
\newblock JSTOR, 1968.

\bibitem{Helly_1962}
W.~Helly, ``{Two Doctrines for the Handling of Two -- Priority Traffic by a
  Group of $N$ Servers},'' {\em Operations Research}, vol.~10, no.~2,
  pp.~268--269, 1962.

\bibitem{Burke_1962}
P.~J. Burke, ``{Priority Traffic with at Most One Queuing Class},'' {\em
  Operations Research}, vol.~10, no.~4, pp.~567--569, 1962.

\bibitem{White_1958}
H.~White and L.~S. Christie, ``Queuing with preemptive priorities or with
  breakdown,'' {\em Operations research}, vol.~6, no.~1, pp.~79--95, 1958.

\bibitem{Master_ACC_2017}
N.~Master, Z.~Zhou, and N.~Bambos, ``{An Infinite Dimensional Model for A
  Single Server Priority Queue},'' {\em arXiv preprint arXiv:1609.07996}, 2016.
\newblock Submitted to the \textit{2017 American Control Conference}.

\bibitem{Haviv_2016}
M.~Haviv, ``The performance of a single-server queue with preemptive random
  priorities,'' {\em Performance Evaluation}, vol.~103, pp.~60 -- 68, 2016.

\bibitem{Takacs_1964}
L.~Tak{\'a}cs, ``Priority queues,'' {\em Operations Research}, vol.~12, no.~1,
  pp.~63--74, 1964.

\bibitem{Chang_1965}
W.~Chang, ``Preemptive priority queues,'' {\em Operations research}, vol.~13,
  no.~5, pp.~820--827, 1965.

\bibitem{Kapadia_1984}
A.~S. Kapadia, M.~F. Kazmi, and A.~C. Mitchell, ``Analysis of a finite capacity
  non-preemptive priority queue,'' {\em Computers \& operations research},
  vol.~11, no.~3, pp.~337--343, 1984.

\bibitem{Doytchinov_2001}
B.~Doytchinov, J.~Lehoczky, and S.~Shreve, ``Real-time queues in heavy traffic
  with earliest-deadline-first queue discipline,'' {\em Annals of Applied
  Probability}, pp.~332--378, 2001.

\bibitem{Gromoll_2004}
H.~C. Gromoll, ``Diffusion approximation for a processor sharing queue in heavy
  traffic,'' {\em Annals of Applied Probability}, pp.~555--611, 2004.

\bibitem{Kaspi_2013}
H.~Kaspi and K.~Ramanan, ``{SPDE limits of many-server queues},'' {\em The
  Annals of Applied Probability}, vol.~23, no.~1, pp.~145--229, 2013.

\bibitem{Reed_2015}
J.~Reed and R.~Talreja, ``{Distribution-valued heavy-traffic limits for the
  $G/GI/\infty$ queue},'' {\em The Annals of Applied Probability}, vol.~25,
  no.~3, pp.~1420--1474, 2015.

\bibitem{Devroye_1986}
L.~Devroye, ``Sample-based non-uniform random variate generation,'' in {\em
  Proceedings of the 18th conference on Winter simulation}, pp.~260--265, ACM,
  1986.

\bibitem{Billingsley}
P.~Billingsley, {\em {Probability and Measure}}.
\newblock John Wiley \& Sons, 2008.

\bibitem{Kleinrock}
L.~Kleinrock, ``{Queueing systems, Volume I: Theory},'' 1975.

\bibitem{Little_1961}
J.~D. Little, ``{A proof for the queuing formula: $L = \lambda W$},'' {\em
  Operations research}, vol.~9, no.~3, pp.~383--387, 1961.

\bibitem{Wolff_1982}
R.~W. Wolff, ``Poisson arrivals see time averages,'' {\em Operations Research},
  vol.~30, no.~2, pp.~223--231, 1982.

\bibitem{Jackson_1963}
J.~R. Jackson, ``Jobshop-like queueing systems,'' {\em Management science},
  vol.~10, no.~1, pp.~131--142, 1963.

\bibitem{Reiman_1984}
M.~I. Reiman, ``Some diffusion approximations with state space collapse,'' in
  {\em Modeling and performance evaluation methodology}, pp.~207--240,
  Springer, 1984.

\end{thebibliography}

\end{document}